\newtheorem{theorem}{Theorem}[section]
\newtheorem{corollary}[theorem]{Corollary}
\newtheorem{remark}[theorem]{Remark}
\newtheorem{proposition}[theorem]{Proposition}
\numberwithin{theorem}{section}
\numberwithin{equation}{section}
\newcommand{\norm}[1]{\left\Vert#1\right\Vert}
\newcommand{\la}{\langle}
\newcommand{\ra}{\rangle}
\newcommand{\Comp}{\mathbb{C}}
\newcommand{\n}{\mathbb{N}}
\begin{document}

\title{Asymptotic analysis for $O_N^+$-Temperley-Lieb quantum channels}


\author{Sang-Gyun Youn}
\address{Sang-Gyun Youn, 
Department of Mathematics Education, Seoul National University, 
Gwanak-ro 1, Gwanak-gu, Seoul 08826, South Korea}
\email{s.youn@snu.ac.kr }

\thanks{This research was supported by National Research Foundation of Korea (NRF) grant funded by the Korea government (MSIT) (No. 2020R1C1C1A01009681) and by Samsung Science and Technology Foundation under Project Number SSTF-BA2002-01.}

\maketitle

\begin{abstract}

Studies on conservation of quantum symmetries have been initiated by recent papers \cite{BCLY20,LY20}. We, in this paper, focus on a class of quantum channels which are covariant for symmetries from free orthogonal quantum groups $O_N^+$. These quantum channels are called $O_N^+$-Temperley-Lieb channels, and their information-theoretic properties such as Holevo information and coherent information were analyzed in \cite{BCLY20}, but their additivity questions remained open. The main result of this paper is to approximate $O_N^+$-Temperley-Lieb quantum channels by much simpler ones in terms Bures distance. As applications, we study strong additivity questions for $O_N^+$-Temperley-Lieb quantum channels, and their classical capacity, private classical capacity and quantum capacity in the asymptotic regime $N\rightarrow \infty$.
\end{abstract}

\section{Introduction}

Conservation of (group) symmetry has been studied from various perspectives in quantum information theory (QIT) and there have been extensive efforts for so-called {\it invariant quantum states} and {\it covariant quantum channels}. Amongst them are \cite{Sc05,KW09,MS14,MHRW16,Ha17a,Ha17b,COS18} and, in particular, the covariance property with respect to compact group actions has been studied in \cite{VW01,DFH06,LS14,AN14,MSD17}.

A class of the simplest non-trivial covariant quantum channels is of the Werner-Holevo quantum channels, which is well-known as a couterexample on Amosov, Holevo and Werner's conjecture \cite{AHW00,WH02}. The Werner-Holevo channels are completely positive trace-preserving maps $\Phi:M_d(\Comp)\rightarrow M_d(\Comp)$ satisfying
\begin{equation}\label{eq1}
\Phi(U\rho U^*)=\overline{U}\Phi(\rho)U^t
\end{equation}
for all $U\in \mathcal{U}(d)$ and $ \rho \in M_d(\Comp)$. An important structure theorem for the Werner-Holevo channels is that they are generated by only two quantum channels, i.e. we have
\begin{align}\label{eq5}
\Phi(\rho)= \frac{1-p}{d+1}\left ( \text{Tr}(\rho)\text{Id}_d +\rho^t \right )+\frac{p}{d-1}\left ( \text{Tr}(\rho)\text{Id}_d - \rho^t \right )
\end{align}
for some $0\leq p\leq 1$. The given covariance property \eqref{eq1} is for the most fundamental unitary group actions $\rho\mapsto U\rho U^*$ and $\rho\mapsto \overline{U}\rho U^t$. A systematic approach to study such a structural analysis has been explored in \cite{VW01,Key02,AN14,MSD17,SC18}, and even for {\it quantum group actions} recently in \cite{BC18,BCLY20,LY20}. 

In particular, \cite{BC18,BCLY20} introduced the $O_N^+$-Temperley-Lieb (TL) quantum channels labeled by 
\begin{equation}
\Phi^{k\rightarrow l}_{N,|k-l|},\Phi^{k\rightarrow l}_{N,|k-l|+2},\cdots, \Phi^{k\rightarrow l}_{N,k+l},
\end{equation}
and \cite[Theorem 4.6 (4)]{LY20} implies that all irreducibly $O_N^+$-covariant quantum channels are convex combinations of those. Some important information-theoretic properties such as entanglement-breaking property, PPT property, (anti-)degradability, Holevo information and coherent information have been studied in \cite{BCLY20} for $O_N^+$-TL quantum channels in the asymptotic regime $N\rightarrow \infty$, whereas additivity questions including estimates for classical or quantum capacities were left as open questions. For example, it is shown in \cite[Corollary 4.6]{BCLY20} that
\begin{equation}
\lim_{N\rightarrow \infty} \left \{ \chi(\Phi^{k\rightarrow l}_{N,m}) -\frac{l+k-m}{2}\log(N)\right\}=0,
\end{equation}
but the question for the classical capacity $C(\Phi^{k\rightarrow l}_{N,m})$ was left open.

In this paper, we approximate Stinespring isometries of $O_N^+$-TL quantum channels to show that $O_N^+$-TL quantum channels are not far away from a class of well-analyzed quantum channels with respect to Bures distance. Then, combining the approximation and some of well-known continuity theorems for capacities, we can strengthen \cite[Corollary 4.6]{BCLY20} even for a tensorization with an arbitrary quantum channel $\Phi'$. Indeed, one of our main results is as follows:
\begin{equation}
\lim_{N\rightarrow \infty} \left \{ C(\Phi^{k\rightarrow l}_{N,m}\otimes \Phi') - C(\Phi^{k\rightarrow l}_{N,m}) - C(\Phi')\right\}=0
\end{equation}
and $\displaystyle \lim_{N\rightarrow \infty}\left \{C(\Phi^{k\rightarrow l}_{N,m}) - \displaystyle \frac{l+k-m}{2}\log(N)\right\}=0$. Moreover, the same conclusion holds for the private classical capacity $P$ and the quantum capacity $Q$.

\section{$O_N^+$-Temperley-Lieb quantum channels}

Recall that all irreducible unitary representations of the special unitary group $SU(2)$ are classified by $\pi_0,\pi_1,\pi_2,\cdots$ whose underlying Hilbert spaces are $H_0=\Comp, H_1=\Comp^2,H_2=\Comp^3,\cdots$ respectively. And the fusion rule in this representation category is given by
\begin{equation}
\pi_l\otimes \pi_m\cong \pi_{|l-m|}\oplus\pi_{|l-m|+2}\oplus \cdots \oplus \pi_{l+m}
\end{equation}
with a canonical Hilbert space decomposition 
\begin{equation}
H_l\otimes H_m=H_{|l-m|}\oplus H_{|l-m|+2}\oplus \cdots \oplus H_{l+m}. 
\end{equation}
The above representation-theoretic features can be captured by diagrammatic calculs from so-called {\it Temperley-Lieb category} (See \cite[Section 3.3]{BCLY20}) and such a unified approach covers a class of very important genuine quantum groups, namely {\it free orthogonal quantum groups} $O_N^+$ ($N\geq 2$). This is considered a universally quantized version of the orthogonal group $O_N$ \cite{Wa95,BS09}.

Indeed it is shown in \cite{Ba96} that all irreducible unitary representations of $O_N^+$ are classified by $u_0,u_1,u_2,\cdots$ up to unitary equivalence, and their underlying Hilbert spaces are denoted by $H_0,H_1,H_2,\cdots$. An important difference from the case of $SU(2)$ is that $\text{dim}(H_1)=N$ and $\text{dim}(H_k)=q^{k}+q^{k-2}+\cdots+q^{-k}$ where $\displaystyle q=\frac{2}{N+\sqrt{N^2-4}}$. Note that $\text{dim}(H_k) > k+1$ if $N\geq 3$. Despite such differences, the same the fusion holds for $O_N^+$, so the tensor product representation of $u_l$ and $u_m$ decomposes into
\begin{equation}
u_{|l-m|}\oplus u_{|l-m|+2}\oplus \cdots \oplus u_{l+m}
\end{equation}
with a canonical Hilbert space decomposition 
\begin{equation}
H_l\otimes H_m=H_{|l-m|}\oplus H_{|l-m|+2}\oplus \cdots \oplus H_{l+m}.
\end{equation} 

Let $\n_0=\left \{0\right\}\cup \n$ and we call $(l,m,k)\in \n_0^3$ an {\it admissible triple} if $k$ is one of $|l-m|$, $|l-m+2|$, $\cdots$, $\l+m$, i.e. $\displaystyle r=\frac{l+m-k}{2}\in \n_0$. For any admissible triple $(l,m,k)$ we have an associated isometry
\begin{equation}
\alpha^{l,m}_k:H_k\hookrightarrow H_l\otimes H_m
\end{equation}
and define quantum channels
\begin{align}
\Phi^{k\rightarrow l}_m:B(H_k)\rightarrow B(H_l),~\rho\mapsto (\text{id}\otimes \text{Tr})(\alpha^{l,m}_k\rho (\alpha^{l,m}_k)^*),
\end{align}
which we call $O_N^+$-Temperley-Lieb quantum channels. Their complementary quantum channels are given by 
\begin{align}
(\Phi^{k\rightarrow l}_m)^c:B(H_k)\rightarrow B(H_m),~\rho\mapsto (\text{Tr}\otimes \text{id})(\alpha^{l,m}_k\rho (\alpha^{l,m}_k)^*).
\end{align}

Since $H_n$ is isometrically embedded into $H_1^{\otimes n}$, we may assume that $\Phi^{k\rightarrow l}_m:B(H_k)\rightarrow B(H_1^{\otimes l})$ and $(\Phi^{k\rightarrow l}_m)^c:B(H_k)\rightarrow B(H_1^{\otimes m})$. Let us denote by $p_n:H_1^{\otimes n}\rightarrow H_n$ the orthogonal projection onto $H_n$. Then we can write the isometry $\alpha^{l,m}_k$ explicitly as
\begin{equation}\label{eq2}
\alpha^{l,m}_k= \sqrt{\frac{[k+1]_q}{\theta_q(k,l,m)}}(p_l\otimes p_m)(\text{id}_{l-r}\otimes T_{2r}\otimes \text{id}_{m-r}):H_k\rightarrow H_l\otimes H_m.
\end{equation}
Here, $T_{2r}:\Comp\rightarrow H_1^{\otimes 2r}$ is given by $1 \mapsto |i_1i_2\cdots i_r\ra\otimes |i_r\cdots i_2i_1\ra$. The {\it quantum integer} is given by $[0]_q=1$ and $[n+1]_q=\text{dim}(H_n)$. The {\it quantum factorial} is defined by $[n+1]_q!=[n+1]_q\cdot [n]_q\cdot \cdots \cdot [1]_q$ and the {\it theta-net} $\theta_q(k,l,m)$ is given by 
\begin{equation}
 \frac{[r]_q![l-r]_q![m-r]_q![k+r+1]_q!}{[l]_q![m]_q![k]_q!}.
\end{equation}

\section{Main results}

The Stinespring isometries $\alpha^{l,m}_k:H_k\hookrightarrow H_l\otimes H_m\hookrightarrow H_1^{\otimes l}\otimes H_1^{\otimes m}$ described in \eqref{eq2} seem complicated, but in turns out that they are approximated by much simpler isometries 
\begin{equation}
\gamma^{l,m}_k:\frac{1}{N^{\frac{r}{2}}}(\text{id}_{H_{l-r}}\otimes  T_{2r}\otimes \text{id}_{H_{m-r}}): H_k\rightarrow H_1^{\otimes l}\otimes H_1^{\otimes m}
\end{equation}
in the asymptotic regime.

\begin{theorem}\label{thm-main}
For any admissible triple $(l,m,k)\in \n_0^3$, we have
\begin{equation}
\norm{\alpha^{l,m}_k - \gamma^{l,m}_k }=O(N^{-1}).
\end{equation}
\end{theorem}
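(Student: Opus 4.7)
The plan is to split $\alpha^{l,m}_k-\gamma^{l,m}_k$ into a scalar-error term and a projection-error term and to control each separately. Set $V:=\id_{H_{l-r}}\otimes T_{2r}\otimes \id_{H_{m-r}}$, viewed as an operator $H_k\hookrightarrow H_1^{\otimes k}\to H_1^{\otimes l}\otimes H_1^{\otimes m}$, and $c_N:=\sqrt{[k+1]_q/\theta_q(k,l,m)}$, so that $\alpha^{l,m}_k=c_N\,(p_l\otimes p_m)\,V$ and $\gamma^{l,m}_k=N^{-r/2}\,V$. Expanding $T_{2r}(1)=\sum_{i_1,\ldots,i_r}|i_1\cdots i_r\,i_r\cdots i_1\rangle$ and using the orthonormality of the standard basis of $H_1^{\otimes k}$, one checks that $\|V\xi\|=N^{r/2}\|\xi\|$ for every $\xi\in H_1^{\otimes k}$; in particular $\gamma^{l,m}_k$ is already an isometry. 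The decomposition
\[
\alpha^{l,m}_k-\gamma^{l,m}_k \;=\; (c_N-N^{-r/2})(p_l\otimes p_m)V \;-\; N^{-r/2}(\id-p_l\otimes p_m)V
\]
reduces the theorem to estimating each summand.

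The first summand requires an asymptotic expansion of $c_N$. From $q+q^{-1}=N$ and $q\to 0$ as $N\to\infty$, one has $q^{-1}=N(1+O(N^{-2}))$, so
\[
[n]_q \;=\; q^{-(n-1)}\bigl(1+q^2+\cdots+q^{2(n-1)}\bigr) \;=\; N^{n-1}\bigl(1+O(N^{-2})\bigr),
\]
with implicit constants depending on $n$ but not on $N$. Taking products yields $[n]_q!=N^{n(n-1)/2}(1+O(N^{-2}))$, and a routine exponent count, using $r=(l+m-k)/2$ so that $l+m-k=2r$, gives
\[
\frac{[k+1]_q}{\theta_q(k,l,m)} \;=\; N^{-r}\bigl(1+O(N^{-2})\bigr),\qquad c_N \;=\; N^{-r/2}\bigl(1+O(N^{-2})\bigr).
\]
Since $\alpha^{l,m}_k$ is an isometry, $(p_l\otimes p_m)V$ has operator norm $c_N^{-1}=O(N^{r/2})$ on $H_k$, so the first summand has norm at most $|c_N-N^{-r/2}|\cdot O(N^{r/2})=O(N^{-2})$.

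For the second summand, orthogonality of $p_l\otimes p_m$ and $\id-p_l\otimes p_m$ together with Pythagoras gives
\[
\|(\id-p_l\otimes p_m)V\xi\|^2 \;=\; \|V\xi\|^2-\|(p_l\otimes p_m)V\xi\|^2 \;=\; (N^r-c_N^{-2})\,\|\xi\|^2
\]
for $\xi\in H_k$, where the last equality uses $\|(p_l\otimes p_m)V\xi\|=\|\xi\|/c_N$ from the isometry property of $\alpha^{l,m}_k$. Since $c_N^{-2}=N^r(1+O(N^{-2}))^{-1}$, one has $N^r-c_N^{-2}=O(N^{r-2})$, and hence $(\id-p_l\otimes p_m)V$ has operator norm $O(N^{r/2-1})$ on $H_k$. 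Multiplying by $N^{-r/2}$ contributes $O(N^{-1})$, completing the proof. The step that really demands care is the $q$-integer bookkeeping producing the cancellation $[k+1]_q/\theta_q(k,l,m)=N^{-r}(1+O(N^{-2}))$; once this leading-order expansion is in hand, everything else follows from a short application of the isometry property of $\alpha^{l,m}_k$ together with Pythagoras.
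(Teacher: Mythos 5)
Your proof is correct and follows essentially the same route as the paper: both split $\alpha^{l,m}_k-\gamma^{l,m}_k$ into a scalar-normalization error and a projection error, and both control the latter via the isometry property of $\alpha^{l,m}_k$ together with Pythagoras. The only difference is that where the paper invokes \cite[Lemma 4.2]{BCLY20} for the key estimate $[k+1]_q/\theta_q(k,l,m)=N^{-r}(1+O(N^{-2}))$, you derive it directly from the expansion $[n]_q=N^{n-1}(1+O(N^{-2}))$ and an exponent count (which checks out: the $\theta$-net has leading exponent $r+k$), making your argument self-contained.
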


\begin{proof}
Since $\alpha^{l,m}_k$ is written as $\displaystyle \sqrt{\frac{[k+1]_qN^r}{\theta_q(l,m,k)}}(p_l\otimes p_m)\gamma^{l,m}_k$ on $H_k$, we have
\begin{equation}
\norm{(p_l\otimes p_m)\gamma^{l,m}_k | \xi\ra }=\sqrt{\frac{\theta_q(l,m,k)}{[k+1]_qN^r}}
\end{equation}
for any unit vector $\xi\in H_k$, so that
\begin{align*}
\norm{\gamma^{l,m}_k|\xi\ra-(p_l\otimes p_m)\gamma^{l,m}_k | \xi\ra }^2&=\norm{\gamma^{l,m}_k|\xi\ra }^2-\norm{ (p_l\otimes p_m)\gamma^{l,m}_k | \xi\ra }^2\\
&=1-\frac{\theta_q(l,m,k)}{[k+1]_qN^r}.
\end{align*}
In particular, we have
\begin{align*}
\norm{\gamma^{l,m}_k-(p_l\otimes p_m)\gamma^{l,m}_k}=\sqrt{1-\frac{\theta_q(l,m,k)}{[k+1]_qN^{r}}}=O(N^{-1}),
\end{align*}
by \cite[Lemma 4.2]{BCLY20}, and this leads us to the following estimates
\begin{align*}
&\norm{\alpha^{l,m}_k-\gamma^{l,m}_k}=\norm{\sqrt{\frac{[k+1]_q N^r}{\theta_q(l,m,k)}}(p_l\otimes p_m)\gamma^{l,m}_k-\gamma^{l,m}_k}\\
&\leq \left \| \left (\sqrt{\frac{[k+1]_qN^r}{\theta_q(l,m,k)} } -1\right )(p_l\otimes p_m)\gamma^{l,m}_k \right \|+\norm{(p_l\otimes p_m)\gamma^{l,m}_k-\gamma^{l,m}_k} \\
&\leq \left | \sqrt{\frac{[k+1]_qN^r}{\theta_q(l,m,k)} } -1 \right |+\norm{\gamma^{l,m}_k-(p_l\otimes p_m)\gamma^{l,m}_k}=O(N^{-1}).
\end{align*}

\end{proof}

The above Theorem \ref{thm-main} is enough to approximate $O_N^+$-TL quantum channels $\Phi^{k\rightarrow l}_m(\rho)=(\text{id}\otimes \text{Tr})(\alpha^{l,m}_k \rho (\alpha^{l,m}_k)^*)$ by
\begin{equation}
\Psi^{k\rightarrow l}_m(\rho)=(\text{id}\otimes \text{Tr})(\gamma^{l,m}_k \rho (\gamma^{l,m}_k)^*)=(\text{id}\otimes \text{Tr}_{H_1^{\otimes (m-r)}})(\rho)\otimes \frac{1}{N^r}\text{Id}_{H_1^{\otimes r}}
\end{equation}
in the sense that their {\it Bures distance} goes to zero as $N\rightarrow \infty$. Indeed, the Bures distance between two quantum channels $\Phi,\Psi:B(H_A)\rightarrow B(H_B)$ is measured by
\begin{equation}
\beta(\Phi,\Psi)=\inf  \left \|V_{\Phi}-V_{\Psi}\right \|
\end{equation}
where the infimum runs over all paris of Stinespring isometries $V_{\Phi},V_{\Psi}:H_A\rightarrow H_B\otimes H_E$ satisfying
\begin{equation}
\Phi(\rho)=(\text{id}\otimes \text{Tr}_E)(V_{\Phi}\rho V_{\Phi}^*)\text{ and }\Psi(\rho)=(\text{id}\otimes \text{Tr}_E)(V_{\Psi}\rho V_{\Psi}^*).
\end{equation}
Associated to this metric is so-called the {\it diamond distance} given by
\begin{equation}
\norm{\Phi-\Psi}_{\diamond}=\sup_{n\in \n} \norm{\text{id}_n\otimes (\Phi-\Psi)}_{S^1(\Comp^n \otimes H_A)\rightarrow S^1(\Comp^n \otimes H_B)},
\end{equation}
which satsisfies $\displaystyle \frac{1}{2}\norm{\Phi-\Psi}_{\diamond}\leq \beta(\Phi,\Psi)\leq \sqrt{\norm{\Phi-\Psi}_{\diamond}}$. See \cite{KSD08} for more details. Using these general facts, we obtain the following result for $O_N^+$-TL quantum channels as a corollary of Theorem \ref{thm-main}:

\begin{corollary}\label{cor2}
Let $(l,m,k)\in \n_0^3$ be an admissible triple and let $\Phi':B(H_{A'})\rightarrow B(H_{B'})$ be a quantum channel. Then, under notations from the above and Theorem \ref{thm-main}, we have
\begin{equation}
\beta(\Phi^{k\rightarrow l}_m\otimes \Phi',\Psi^{k\rightarrow l}_m\otimes \Phi')=O(N^{-1}).
\end{equation}
In particular, we have 
\begin{equation}
\norm{\Phi^{k\rightarrow l}_m\otimes \Phi'-\Psi^{k\rightarrow l}_m\otimes \Phi'}_{\diamond}=O(N^{-1}).
\end{equation}
\end{corollary}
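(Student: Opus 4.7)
The plan is to lift Theorem \ref{thm-main} to the tensor product with $\Phi'$ by exhibiting a single coupled pair of Stinespring dilations for which the isometry norm can be controlled directly. Fix any Stinespring isometry $V_{\Phi'}:H_{A'}\rightarrow H_{B'}\otimes H_{E'}$ for $\Phi'$. Then, after the obvious reshuffling of tensor legs, $\alpha^{l,m}_k\otimes V_{\Phi'}$ is a Stinespring isometry for $\Phi^{k\rightarrow l}_m\otimes \Phi'$ and $\gamma^{l,m}_k\otimes V_{\Phi'}$ is a Stinespring isometry for $\Psi^{k\rightarrow l}_m\otimes \Phi'$, sharing the same output and environment spaces. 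Plugging this pair into the defining infimum of $\beta$ gives
\begin{equation*}
\beta(\Phi^{k\rightarrow l}_m\otimes \Phi',\Psi^{k\rightarrow l}_m\otimes \Phi')\leq \norm{(\alpha^{l,m}_k-\gamma^{l,m}_k)\otimes V_{\Phi'}}.
\end{equation*}

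Next I would use the standard fact that the operator norm is multiplicative on tensor products, so that the right-hand side equals $\norm{\alpha^{l,m}_k-\gamma^{l,m}_k}\cdot \norm{V_{\Phi'}}$. Since $V_{\Phi'}$ is an isometry its operator norm is $1$, and Theorem \ref{thm-main} gives $\norm{\alpha^{l,m}_k-\gamma^{l,m}_k}=O(N^{-1})$, which yields the first claim $\beta(\Phi^{k\rightarrow l}_m\otimes \Phi',\Psi^{k\rightarrow l}_m\otimes \Phi')=O(N^{-1})$.

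For the second claim I would simply invoke the general comparison $\norm{\Phi-\Psi}_{\diamond}\leq 2\beta(\Phi,\Psi)$ already recorded in the excerpt (citing \cite{KSD08}) applied to the pair $\Phi^{k\rightarrow l}_m\otimes \Phi'$ and $\Psi^{k\rightarrow l}_m\otimes \Phi'$; the Bures bound from the previous paragraph immediately propagates to $\norm{\Phi^{k\rightarrow l}_m\otimes \Phi'-\Psi^{k\rightarrow l}_m\otimes \Phi'}_{\diamond}=O(N^{-1})$. There is no substantial obstacle here since Theorem \ref{thm-main} does all the real work; the only mildly delicate point is to make sure that tensoring with $V_{\Phi'}$ really does produce a valid Stinespring dilation of the tensor-product channel with a common environment, which is a routine bookkeeping exercise in the definition of Bures distance.
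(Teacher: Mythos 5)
Your argument is correct and is exactly the route the paper intends: the paper gives no explicit proof of this corollary, treating it as immediate from Theorem \ref{thm-main} together with the definition of $\beta$ and the inequality $\frac{1}{2}\norm{\Phi-\Psi}_{\diamond}\leq \beta(\Phi,\Psi)$, and your write-up simply fills in the natural details (tensoring the Stinespring isometries with a common environment, multiplicativity of the operator norm, and $\norm{V_{\Phi'}}=1$). No gaps.
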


A crucial fact is that a broad array of capacities are continuous with respect to the diamond distance $\norm{\cdot}_{\diamond}$ \cite{LS09}. Combining Corollaries 1,2,3 of \cite{LS09} and Corollary \ref{cor2}, we obtain 
\begin{equation}
\left |C(\Phi^{k\rightarrow l}_m\otimes \Phi') -C(\Psi^{k\rightarrow l}_m\otimes \Phi')\right |  = O\left (N^{-1}\log(N)\right ),
\end{equation}
and moreover the same conclusion holds for the quantum capacity $Q$ and the private classical capacity $P$. From now on, let's turn our attention to estimate capacities $C,Q,P$ of $\Psi^{k\rightarrow l}_m\otimes \Phi'$. The following proposition follows from an adaption of the proof of \cite[Theorem 4.5]{BCLY20}.

\begin{proposition}\label{prop1}
For any admissible triple $(l,m,k)$ and $N\geq 2$, we have
\small
\begin{equation}\label{eq3}
\frac{l+k-m}{2}\log(N-1)\leq Q^{(1)}(\Psi^{k\rightarrow l}_m)\leq C(\Psi^{k\rightarrow l}_m)\leq  \frac{l+k-m}{2}\log(N),
\end{equation}
\begin{equation}\label{eq4}
\small \frac{m+k-l}{2}\log(N-1)\leq Q^{(1)}((\Psi^{k\rightarrow l}_m)^c)\leq C((\Psi^{k\rightarrow l}_m)^c)\leq  \frac{m+k-l}{2}\log(N)
\end{equation}
\normalsize
\end{proposition}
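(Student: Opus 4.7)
The plan is to exploit the transparent structure of $\Psi^{k\rightarrow l}_m$ (for which the Jones-Wenzl projection $p_l\otimes p_m$ has been stripped away) and adapt the argument of \cite[Theorem 4.5]{BCLY20}. The first step is to identify $H_1^{\otimes k} = A\otimes B$ with $A = H_1^{\otimes(l-r)}$ and $B = H_1^{\otimes(m-r)}$ and to compute from the definition of $\gamma^{l,m}_k$ that
\begin{equation*}
\Psi^{k\rightarrow l}_m(\rho) = \text{Tr}_B(\rho)\otimes\frac{\text{Id}_{H_1^{\otimes r}}}{N^r},\qquad (\Psi^{k\rightarrow l}_m)^c(\rho) = \text{Tr}_A(\rho)\otimes\frac{\text{Id}_{H_1^{\otimes r}}}{N^r},
\end{equation*}
so that the coherent information collapses to $I_c(\rho,\Psi^{k\rightarrow l}_m) = S(\text{Tr}_B\rho)-S(\text{Tr}_A\rho)$ for any state $\rho$ supported on $H_k$.

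The upper bounds follow quickly: since tensoring with a fixed state preserves the classical capacity, $C(\Psi^{k\rightarrow l}_m)$ equals the classical capacity of the partial-trace channel $B(H_k)\rightarrow B(A)$, which is bounded by $\log\dim A = \frac{l+k-m}{2}\log N$; the argument for $(\Psi^{k\rightarrow l}_m)^c$ is symmetric after swapping $A$ and $B$. For the lower bound on $Q^{(1)}(\Psi^{k\rightarrow l}_m)$, the plan is to exhibit a unit vector $|\phi\ra\in B$ and a subspace $U_0\subset A$ with $\dim U_0 \geq (N-1)^{l-r}$ such that $U_0\otimes|\phi\ra\subset H_k$. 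Testing the coherent information on $\rho := \frac{1}{\dim U_0}P_{U_0}\otimes|\phi\ra\la\phi|$ then gives $\text{Tr}_B\rho = P_{U_0}/\dim U_0$ and $\text{Tr}_A\rho = |\phi\ra\la\phi|$, so that $I_c(\rho,\Psi^{k\rightarrow l}_m)=\log\dim U_0\geq \frac{l+k-m}{2}\log(N-1)$, and the complementary case is again symmetric.

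The principal obstacle is the construction of $(U_0,|\phi\ra)$. Writing $H_k$ as the intersection of the kernels of the Temperley-Lieb contractions $T_2^*$ applied at each pair of adjacent tensor positions of $H_1^{\otimes k}$, the inclusion $U_0\otimes|\phi\ra\subset H_k$ splits into $B$-internal conditions (forcing $|\phi\ra\in H_{m-r}$), $A$-internal conditions ($U_0\subset H_{l-r}$), and a single boundary contraction tying the last factor of $A$ to the first factor of $B$. I would take $|\phi\ra = |1\ra\otimes|\eta\ra$ with $|\eta\ra\in H_1^{\otimes(m-r-1)}$ chosen to satisfy the $B$-side Temperley-Lieb constraints and to have no component whose first tensor index equals $1$; this reduces the boundary condition to requiring that elements of $U_0$ have no support on basis vectors whose last tensor index is $1$. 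A dimension count inside $H_{l-r}$ using $\dim H_{l-r} = [l-r+1]_q$ and the recursion $[j+1]_q = N[j]_q - [j-1]_q$ (the statement being trivial for $N=2$ since then $\log(N-1)=0$) then yields the required bound on $\dim U_0$. The delicate point is ensuring the simultaneous satisfaction of every Temperley-Lieb constraint by $|\eta\ra$ together with the first-index restriction, which is the combinatorial ingredient essentially carried over from the proof of \cite[Theorem 4.5]{BCLY20}.
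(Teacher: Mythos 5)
Your architecture is the paper's: the same closed forms $\Psi^{k\rightarrow l}_m(\rho)=\mathrm{Tr}_B(\rho)\otimes\frac{1}{N^r}\text{Id}_{H_1^{\otimes r}}$ and $(\Psi^{k\rightarrow l}_m)^c(\rho)=\mathrm{Tr}_A(\rho)\otimes\frac{1}{N^r}\text{Id}_{H_1^{\otimes r}}$, the same upper bound via the partial-trace channel $B(H_1^{\otimes k})\rightarrow B(H_1^{\otimes(l-r)})$, and the same lower-bound device: the paper's test state is exactly of your form $\frac{1}{\dim U_0}P_{U_0}\otimes|\phi\ra\la\phi|$ with $|\phi\ra=|1212\cdots\ra$. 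The only divergence is the step you yourself flag as delicate, and that step is the entire content of the proof. The paper closes it in one line: take $U_0$ to be the span of the $(N-1)^{l-r}$ product basis vectors $|i_1\cdots i_{l-r}\ra$ with $i_1\neq i_2,\dots,i_{l-r-1}\neq i_{l-r}$ and $i_{l-r}\neq 1$. Since $T_2^*|ab\ra=\delta_{ab}$ for $O_N^+$, every adjacent Temperley--Lieb contraction annihilates $|i_1\cdots i_{l-r}\ra\otimes|1212\cdots\ra$, so $U_0\otimes|\phi\ra\subseteq H_k$ and $\dim U_0=(N-1)^{l-r}$ exactly; no separate analysis of $|\eta\ra$, of the $B$-internal constraints, or of the boundary contraction is needed beyond the single condition $i_{l-r}\neq 1$.

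By contrast, the dimension count you propose does not close the gap and, in its natural reading, provably cannot. Reading it as the generic intersection bound inside $H_1^{\otimes(l-r)}$, you would get $\dim\bigl(H_{l-r}\cap(H_1^{\otimes(l-r-1)}\otimes V)\bigr)\geq [l-r+1]_q-N^{l-r-1}$ with $V=\Span\{|2\ra,\dots,|N\ra\}$; but $[l-r+1]_q$ grows like $(q^{-1})^{l-r}$ with $q^{-1}=\tfrac{1}{2}(N+\sqrt{N^2-4})<N$, so for fixed $N\geq 3$ and $l-r$ large one has $[l-r+1]_q<N^{l-r-1}$ (e.g.\ $N=3$, $l-r\geq 10$) and the bound is vacuous. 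The intersection is nonetheless large, but only because $H_{l-r}$ sits in a highly non-generic position relative to the ``last index $\neq 1$'' subspace --- which is precisely what the explicit basis above exhibits and what no transversality count can see. (A more careful count performed inside $H_{l-r-1}\otimes H_1$ rather than $H_1^{\otimes(l-r)}$ can be salvaged, but it still requires proving $[l-r+1]_q-[l-r]_q\geq(N-1)^{l-r}$ by a separate induction, so it is strictly more work than the explicit construction.) Replace the recursion-based count by the explicit basis and your proof coincides with the paper's.
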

\begin{proof}
For the first inequality of \eqref{eq3}, note that 
\begin{equation}
|i_1i_2\cdots i_{l-r}\ra\otimes |1212\cdots \ra \in H_k\subseteq H_1^{\otimes k}\
\end{equation}
for any $(i_1,i_2,\cdots, i_{l-r})\in \left \{1,2,\cdots,N\right\}^k$ such that $i_1\neq i_2$, $i_2\neq i_3$, $\cdots$, $i_{l-r-1}\neq i_{l-r}$ and $i_{l-r}\neq 1$. Let us denote by $S$ the set of all such $(N-1)^{l-r}$ vectors. Then for the following mixed state
\begin{equation}
\rho= \frac{1}{(N-1)^{l-r}}\sum_{\xi \in S}|\xi\ra\la \xi | \in B(H_{k})
\end{equation}
we have 
\begin{equation}
\left \{ \begin{array}{ll}
\displaystyle \Psi^{k\rightarrow l}_m(\rho)=\frac{1}{(N-1)^{l-r}}\sum_{\xi \in S}|i_1\cdots i_{l-r}\ra\la i_1\cdots i_{l-r} | \otimes \frac{1}{N^r}\text{Id}_{H_1}^{\otimes r}\\
\displaystyle (\Psi^{k\rightarrow l}_m)^c(\rho)=|1212\cdots \ra\la 1212\cdots |\otimes \frac{1}{N^r}\text{Id}_{H_1}^{\otimes r}.
\end{array} \right .
\end{equation}
This gives us
\begin{align*}
Q^{(1)}(\Psi^{k\rightarrow l}_m)&\geq H(\Psi^{k\rightarrow l}_m(\rho))-H((\Psi^{k\rightarrow l}_m)^c(\rho))\\
&=((l-r)\log(N-1)+r\log(N))-r\log(N)\\
&=(l-r)\log(N-1).
\end{align*}
The last inequality of \eqref{eq3} follows from the fact that
\begin{equation}
C(\Psi^{k\rightarrow l}_m)=C((\text{id}_{H_1^{\otimes (l-r)}}\otimes \text{Tr}_{H_1^{\otimes (m-r)}})(\cdot))\leq \log(N^{l-r}).
\end{equation}
On the other side, a similar proof works for \eqref{eq4} using vectors 
\begin{equation}
|\cdots 2121\ra\otimes |i_{l-r+1}i_{l-r+2}\cdots i_{k}\ra
\end{equation}
such that $1\neq i_{l-r+1}$, $i_{l-r+1}\neq i_{l-r+2}$, $\cdots$, $i_{k-1}\neq i_k$.
\end{proof}

\begin{remark}
Recall that \cite[Corollary 4.6]{BCLY20} states that for any admissible triple $(l,m,k)\in \n_0^3$ we have
\footnotesize
\begin{align}
& Q^{(1)}(\Phi^{k\rightarrow l}_m),P^{(1)}(\Phi^{k\rightarrow l}_m),\chi(\Phi^{k\rightarrow l}_m)=\frac{l+k-m}{2}\log(N)+O(N^{-1}\log(N)),\\
&Q^{(1)}((\Phi^{k\rightarrow l}_m)^c),P^{(1)}((\Phi^{k\rightarrow l}_m)^c),\chi((\Phi^{k\rightarrow l}_m)^c)=\frac{m+k-l}{2}\log(N)+O(N^{-1}\log(N)),
\end{align}
\normalsize but the questions for capacities $Q,P,C$ remained open. We can establish an affirmative answer on this by combining Proposition \ref{prop1} and Corollary \ref{cor2}: For any admissible triple $(l,m,k)\in \n_0^3$ we have
\small
\begin{align}
&Q(\Phi^{k\rightarrow l}_m),P(\Phi^{k\rightarrow l}_m),C(\Phi^{k\rightarrow l}_m)=\frac{l+k-m}{2}\log(N)+O(N^{-1}\log(N)),\\
&Q((\Phi^{k\rightarrow l}_m)^c),P((\Phi^{k\rightarrow l}_m)^c),C((\Phi^{k\rightarrow l}_m)^c)=\frac{m+k-l}{2}\log(N)+O(N^{-1}\log(N)).
\end{align}
\normalsize One more advantage is that the techniques in \cite{BCLY20} can be explained in a more straightforward way.
\end{remark}

An important fact is that our analysis works not only for additivity questions, but also for {\it strong additivity} questions for the capacities $C,P,Q$. Indeed, combining our asymptotic analysis and the fact that partial traces are strongly additive for capacities $C,P,Q$ \cite{GJL18a}, we can prove the following results for $O_N^+$-TL quantum channels in the asymptotic regime:

\begin{proposition}
Let $(l,m,k)\in \n_0^3$ be an admissible triple and let $\Phi':B(H_{A'})\rightarrow B(H_{B'})$ be a quantum channel. Then we have
\begin{equation}
C(\Psi^{k\rightarrow l}_m\otimes \Phi')=\frac{l+k-m}{2}\log(N)+C(\Phi')+O\left ( N^{-1}\right ).
\end{equation}
In particular, for $O_N^+$-TL quantum channels, we have 
\begin{align}
C(\Phi^{k\rightarrow l}_m\otimes \Phi')&=C(\Phi^{k\rightarrow l}_m)+C(\Phi')+O\left ( N^{-1}\log(N)\right ),\\
C((\Phi^{k\rightarrow l}_m)^c\otimes \Phi')&=C((\Phi^{k\rightarrow l}_m)^c)+C(\Phi')+O\left ( N^{-1}\log(N)\right )
\end{align}
Moreover, the same conclusion holds for the quantum capacity $Q$ and the private classical capacity $P$.
\end{proposition}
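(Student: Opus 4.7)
My strategy is to exploit the very transparent structure of the approximating channel $\Psi^{k\rightarrow l}_m$, reduce the problem to the strong additivity of partial traces from \cite{GJL18a}, and then transfer the conclusion to $\Phi^{k\rightarrow l}_m$ via Corollary \ref{cor2} together with the diamond-distance continuity of capacities from \cite{LS09}. Writing $H_1^{\otimes k}=H_1^{\otimes(l-r)}\otimes H_1^{\otimes(m-r)}$, let $\iota:B(H_k)\to B(H_1^{\otimes k})$ be the isometric channel induced by the inclusion $H_k\subseteq H_1^{\otimes k}$, let $\tilde T$ denote the partial trace over $H_1^{\otimes(m-r)}$, and let $\sigma(\eta)=\eta\otimes \frac{1}{N^r}\text{Id}_{H_1^{\otimes r}}$. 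The same calculation already used in the proof of Proposition \ref{prop1} yields the factorization
\begin{equation*}
\Psi^{k\rightarrow l}_m = \sigma\circ \tilde T\circ \iota,\qquad \text{so that}\qquad \Psi^{k\rightarrow l}_m\otimes \Phi' = (\sigma\otimes \id)\circ (\tilde T\otimes \Phi')\circ (\iota\otimes \id).
\end{equation*}

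For the upper bound, I will use two elementary observations: pre-composing with $\iota\otimes\id$ only restricts the admissible input states, while post-composing with $\sigma\otimes\id$ only appends an input-independent mixed tensor factor that carries no classical, private, or quantum information. Each operation can therefore only decrease the capacities $C,P,Q$. Combined with the strong additivity of the partial trace $\tilde T$ from \cite{GJL18a}, this yields
\begin{equation*}
C(\Psi^{k\rightarrow l}_m\otimes \Phi')\leq C(\tilde T\otimes \Phi')= C(\tilde T)+C(\Phi')=\frac{l+k-m}{2}\log(N)+C(\Phi'),
\end{equation*}
and identical estimates for $P$ and $Q$. For the matching lower bound, I invoke the elementary super-additivity of $C$ on tensor products together with the achievability bound in Proposition \ref{prop1}:
\begin{equation*}
C(\Psi^{k\rightarrow l}_m\otimes \Phi')\geq C(\Psi^{k\rightarrow l}_m)+C(\Phi')\geq \frac{l+k-m}{2}\log(N-1)+C(\Phi')=\frac{l+k-m}{2}\log(N)+C(\Phi')+O(N^{-1}).
\end{equation*}
The same chain of inequalities works for $P$ and $Q$, using $Q\geq Q^{(1)}$, $P\geq Q^{(1)}$ and the $Q^{(1)}$-lower bound in Proposition \ref{prop1}. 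This establishes the first identity of the statement.

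Finally, to transfer the conclusion from $\Psi^{k\rightarrow l}_m$ to $\Phi^{k\rightarrow l}_m$ and to $(\Phi^{k\rightarrow l}_m)^c$, I combine Corollary \ref{cor2}, which gives $\norm{\Phi^{k\rightarrow l}_m\otimes \Phi'-\Psi^{k\rightarrow l}_m\otimes \Phi'}_{\diamond}=O(N^{-1})$, with the Fannes-type diamond-distance continuity of $C,P,Q$ from \cite{LS09}; since the output dimension only grows like $N^l$, the continuity error is $O(N^{-1}\log(N))$, exactly as asserted. The complementary channel $(\Phi^{k\rightarrow l}_m)^c$ is itself an $O_N^+$-TL channel, so no extra argument is needed there. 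The main delicate step I expect is the upper bound: one must verify cleanly that appending a maximally mixed tensor factor (via $\sigma$) and restricting along the isometric embedding (via $\iota$) really do leave $C,P,Q$ unchanged, or at worst can only decrease them, so that the strong additivity of the partial trace $\tilde T$ from \cite{GJL18a} transfers cleanly through the factorization $\Psi^{k\rightarrow l}_m=\sigma\circ\tilde T\circ\iota$.
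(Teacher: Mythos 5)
Your proposal is correct and follows essentially the same route as the paper: superadditivity of $C$ (and $Q\geq Q^{(1)}$, $P\geq Q^{(1)}$) together with Proposition \ref{prop1} for the lower bound, the realization of $\Psi^{k\rightarrow l}_m$ as a restriction of a partial-trace-type map combined with the strong additivity result of \cite{GJL18a} for the upper bound, and Corollary \ref{cor2} with the diamond-norm continuity bounds of \cite{LS09} for the transfer to $\Phi^{k\rightarrow l}_m$ and $(\Phi^{k\rightarrow l}_m)^c$. The only (harmless) difference is that you peel off the append-maximally-mixed-state map $\sigma$ by an explicit data-processing argument before invoking strong additivity of the bare partial trace $\tilde T$, whereas the paper applies \cite[Proposition 5]{GJL18a} directly to the composite map $\widetilde{\Psi}(\rho)=(\mathrm{id}\otimes \mathrm{Tr})(\rho)\otimes \frac{1}{N^{r}}\mathrm{Id}_{H_1^{\otimes r}}$.
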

\begin{proof}
First of all, we have
\begin{align}
&C(\Psi^{k\rightarrow l}_m\otimes \Phi')\geq C(\Psi^{k\rightarrow l}_m)+ C( \Phi')\\
&\geq Q^{(1)}(\Psi^{k\rightarrow l}_m)+ C( \Phi')\geq \frac{l+k-m}{2}\log(N-1)+C(\Phi')
\end{align}
by superadditivity of $C$ and Proposition \ref{prop1}. The converse direction follows from the fact that $\Psi^{k\rightarrow l}_m$ is a restriction of a partial trace. More precisely, $B(H_k)$ is a subsystem of $B(H_1^{\otimes k})$, and the restriction of 
\[\widetilde{\Psi}(\rho)=(\text{id}_{H_1^{\otimes (l-r)}}\otimes \text{Tr}_{H_1^{\otimes (m-r)}})(\rho)\otimes \frac{1}{N^{r}}\text{Id}_{H_1^{\otimes r}}:B(H_1^{\otimes k})\rightarrow B(H_1^{\otimes l})\]
to $B(H_k)$ is exactly same with $\Phi^{k\rightarrow l}_m$. Thus, we have
\begin{equation}
C(\Psi^{k\rightarrow l}_m\otimes \Phi')\leq C(\widetilde{\Psi}\otimes \Phi')=\log(N^{l-r})+C(\Phi').
\end{equation}
Here, the last equality follows from strong additivity of direct sums of partial traces \cite[Proposition 5]{GJL18a}. These estimates tell us that 
\begin{equation}
C(\Psi^{k\rightarrow l}_m\otimes \Phi')=\frac{l+k-m}{2}\log(N)+C(\Phi')+O(N^{-1}),
\end{equation}
which leads us to following conclusion
\begin{align}
C(\Phi^{k\rightarrow l}_m\otimes \Phi')&=C(\Psi^{k\rightarrow l}_m\otimes \Phi')+O(N^{-1}\log(N))\\
&=\frac{l+k-m}{2}\log(N)+C(\Phi')+O(N^{-1}\log(N))\\
&=C(\Phi^{k\rightarrow l}_m)+C(\Phi')+O(N^{-1}\log(N)).
\end{align}
Here, the first and last equalities are thanks to Corollary \ref{cor2} and Proposition \ref{prop1} respectively. Similar arguments apply to the cases of the quantum capacity and the private classical capacity, and also for the complementary quantum channels $(\Phi^{k\rightarrow l}_m)^c$.
\end{proof}

\bibliographystyle{alpha}
\bibliography{Youn}

\end{document}